\newtheorem{theorem}{Theorem}%[section]
\newtheorem{proposition}{Proposition}
\newtheorem{corollary}{Corollary}
\newtheorem{remark}{Remark}
\def\nn{\nonumber}
\def\BibTeX{{\rm B\kern-.05em{\sc i\kern-.025em b}\kern-.08em
    T\kern-.1667em\lower.7ex\hbox{E}\kern-.125emX}}
\begin{document}

\title{Information-Theoretic Bounds on the Moments of the Generalization Error of Learning Algorithms}

%\title{Information-Theoretic Approach to Bounding the Moments of Generalization Error}

\author{\IEEEauthorblockN{Gholamali Aminian\textsuperscript{\textsection}, Laura Toni, Miguel R. D. Rodrigues}\\
\IEEEauthorblockA{\textit{ Department of Electronic and Electrical Engineering} \\
\textit{University College London}\\
%London, UK \\
\{g.aminian, l.toni, m.rodrigues\}@ucl.ac.uk}
% \and
% \IEEEauthorblockN{2\textsuperscript{nd} ....}
% \IEEEauthorblockA{\textit{Department of Electronic and Electrical Engineering} \\
% \textit{University College London}\\
% London, UK \\
% email address or ORCID}
% \and
% \IEEEauthorblockN{3\textsuperscript{rd} ....}
% \IEEEauthorblockA{\textit{Department of Electronic and Electrical Engineering} \\
% \textit{University College London}\\
% London, UK\\
% email address or ORCID}
}

\maketitle
\begingroup\renewcommand\thefootnote{\textsection}
\footnotetext{The first author is supported by the Royal Society Newton International Fellowship, grant no. NIF\textbackslash R1 \textbackslash 192656 .}
\endgroup

\begin{abstract}
Generalization error bounds are critical to understanding the performance of machine learning models. In this work, building upon a new bound of the expected value of an arbitrary function of the population and empirical risk of a learning algorithm, we offer a more refined analysis of the generalization behaviour of a machine learning models based on a characterization of (bounds) to their generalization error moments. We discuss how the proposed bounds -- which also encompass new bounds to the expected generalization error -- relate to existing bounds in the literature. We also discuss how the proposed generalization error moment bounds can be used to construct new generalization error high-probability bounds.
%propose a new general information-theoretic based upper bound for the expected arbitrary function of empirical and population risks applicable to supervised learning scenarios. We also show that our general bound can be specialized to some upper bounds on moments of generalization error. Finally, we provide some new upper bounds on variance and the absolute value of generalization error using upper bounds on the moments of generalization error.
\end{abstract}
\begin{IEEEkeywords}
Population Risk, Empirical Risk, Generalization Error, Generalization Error Moments, Information Measures \end{IEEEkeywords}
%%%%%%%%%%%%%%%%%%%%%%%%%%%%%%%%%%%%%%%%%%%%%%%%%%%%%%%%%%%%%%%%%%%%%%
\section{Introduction}\label{Sec:Introduction}
 Machine learning-based approaches are increasingly adopted to solve various prediction problems in a wide range of applications such as computer vision, speech recognition, speech translation, and many more~\cite{shalev2014understanding},~\cite{bengio2017deep}.
 In particular, supervised machine learning approaches learn a predictor -- also known as a hypothesis -- mapping input variables to output variables using some algorithm that leverages a series of input-output examples drawn from some underlying (and unknown) distribution \cite{shalev2014understanding}. It is therefore critical to understand the generalization ability of such a predictor, i.e., how the predictor performance on the training set differs from its performance on a testing set (or on the population).
 
 A recent research direction within the information-theoretic and related communities has concentrated on the development of approaches to characterize the generalization error of \textit{randomized learning algorithms}, i.e. learning algorithms map the set of training examples to the hypothesis according to some probability law {\cite{xu2017information},\cite{raginsky2016information}}.

The characterization of the generalization ability of randomized learning algorithms has come in two broad flavours. One involves determining a bound to the generalization error that holds on average. For example, building upon pioneering work by Russo and Zou~\cite{russo2019much}, Xu and Raginsky~\cite{xu2017information} have derived average generalization error bounds involving the mutual information between the training set and the hypothesis. Bu \textit{et al.}~\cite{bu2020tightening} have derived tighter average generalization error bounds involving the mutual information between each sample in the training set and the hypothesis. Bounds using chaining mutual information have been proposed in \cite{asadi2018chaining}. The upper bounds based on conditional mutual information are proposed in \cite{steinke2020reasoning} and \cite{negrea2019information}. Other authors have also constructed information-theoretic based average generalization error bounds using quantities such as $\alpha$-R\'eyni divergence, $f$-divergence, Jensen-Shannon divergences, Wasserstein distances, or maximal leakage (see~\cite{esposito2019generalization},~\cite{aminian2020jensen},~\cite{lopez2018generalization}, ~\cite{wang2019information},~\cite{rodriguez2021tighter} or~\cite{jiao2017dependence}). 

The other flavour -- known as \textit{probably approximately correct (PAC)-Bayesian} bounds and \textit{single-draw} upper bounds -- involves determining a bound to the generalization error that holds with high probability. The original PAC-Bayesian generalization error bounds have been characterized via a Kullback-Leibler (KL) divergence (a.k.a. relative entropy) between a prior data-free distribution and a posterior data-dependent distribution on the hypothesis space~\cite{mcallester2003pac}. Other slightly different PAC-Bayesian generalization error bounds have also been offered in  ~\cite{thiemann2017strongly},~\cite{catoni2003pac},  \cite{alquier2018simpler}, \cite{ohnishi2021novel} and  \cite{hellstrom2020generalization}. A general PAC-Bayesian framework offering high probability bounds on a convex function of the population risk and empirical risk with respect to a posterior distribution has also been provided in~\cite{germain2015risk}. A PAC-Bayesian upper bound by considering a Gibbs data-dependent prior is provided in \cite{rivasplata2020pac}. Some single-draw upper bounds have been proposed in \cite{xu2017information}, \cite{esposito2019generalization}, and  \cite{hellstrom2020generalization}.

%{\color{blue}Most of the previous methods provide a characterization of generalization error. However, few works discuss the moments of generalization error. If we consider generalization error as a random variable, knowing its moments helps us predict generalization error behavior better. In \cite{russo2019much}, the mean square of data exploration error is analyzed, and an upper bound is derived using mutual information. Moment analysis for model selection measure has been done in \cite{dhurandhar2009semi}. Considering stable learning algorithms, a sharper high probability bound for sums of functions of n independent variables based on their moments is derived \cite{bousquet2020sharper}. }

In this paper,  we aspire to offer a more refined analysis of the generalization ability of randomized learning algorithms in view of the fact that the generalization error can be seen as a random variable with distribution that depends on randomized algorithm distribution and the data distribution. %We therefore concentrate on the characterization of bounds to the moments of the generalization error of such algorithms in order to capture better how the population risk may deviate from the empirical risk.
The analysis of moments of certain quantities arising in statistical learning problems has already been considered in certain works. For example, Russo and and Zou~\cite{russo2019much} have analysed bounds to certain moments of the error arising in data exploration problems, whereas Dhurandhar and Dobra~\cite{dhurandhar2009semi} have analysed bounds to moments of the error arising in model selection problems. Sharper high probability bounds for sums of functions of independent random variables based on their moments, within the context of stable learning algorithms, have also been derived in~\cite{bousquet2020sharper}.
However, to the best of our knowledge, a characterization of bounds to the moments of the generalization error of randomized learning algorithms, allowing us to capture better how the population risk may deviate from the empirical risk, does not appear to have been considered in the literature.

Our contributions are as follows:

\begin{enumerate}

    \item First, we offer a general upper bound on the expected value of a function of the population risk and the empirical risk of a randomized learning algorithm expressed via certain information measures between the training set and the hypothesis.
    
    \item Second, we offer upper bounds on the moments of the generalization error of a randomized learning algorithm deriving from the aforementioned general bound in terms of power information and Chi-square information measures. We also propose another upper bound on the second moment of generalization error in terms of mutual information.
    
    \item Third, we show how to leverage the generalization error moment bounds to construct high-probability bounds showcasing how the population risk deviates from the empirical risk associated with a randomized learning algorithm.

    \item Finally, we show how the proposed results bound the true moments of the generalization error via a simple numerical example.

\end{enumerate}

We adopt the following notation in the sequel. Upper-case letters denote random variables (e.g., $Z$), lower-case letters denote random variable realizations (e.g. $z$), and calligraphic letters denote sets (e.g. $\mathcal{Z}$). The distribution of the random variable $Z$ is denoted by $P_Z$ and the joint distribution of two random variables $(Z_1,Z_2)$ is denoted by $P_{Z_1 Z_2}$. We let $\log (\cdot)$ represent the natural logarithm. We also let $\mathbb{Z}^+$ represent the set of positive integers.

\section{Problem Formulation}
We consider a standard supervised learning setting where we wish to learn a hypothesis given a set of input-output examples; we also then wish to use this hypothesis to predict new outputs given new inputs.

We model the input data (also known as features) using a random variable $X \in \mathcal{X}$ where $\mathcal{X}$ represents the input set; we model the output data (also known as labels) using a random variable $Y \in \mathcal{Y}$ where $\mathcal{Y}$ represents the output set; we also model input-output data pairs using a random variable $Z = (X,Y) \in \mathcal{Z} = \mathcal{X} \times \mathcal{Y}$ where $Z$ is drawn from $\mathcal{Z}$ per some unknown distribution $\mu$. We also let $\mathcal{S} = \{Z_i\}_{i=1}^n$ be a training set consisting of a number of input-output data points drawn i.i.d. from $\mathcal{Z}$ according to $\mathcal{\mu}$.

We represent hypotheses using a random variable $W \in \mathcal{W}$ %($W: \mathcal{X} \rightarrow \mathcal{Y}$)
where $\mathcal{W}$ is a hypothesis class. We also represent a randomized learning algorithm via a Markov kernel that maps a given training set $\mathcal{S}$ onto a hypothesis $W$ of the hypothesis class $\mathcal{W}$ according to the probability law $P_{W|S}$.

Let us also introduce a (non-negative) loss function $\ell:\mathcal{W} \times \mathcal{Z}  \rightarrow \mathbb{R}^+$ that measures how well a hypothesis predicts an output given an input. We can now define the population risk and the empirical risk given by:
\begin{align}
&L_P(w,\mu)\triangleq \int_{\mathcal{Z}}\ell(w,z)\mu(z) dz\\
&L_E(w,s)\triangleq\frac{1}{n}\sum_{i=1}^n \ell(w,z_i)
\end{align}
which quantify the performance of a hypothesis $w$ delivered by the randomized learning algorithm on a testing set (population) and the training set, respectively. We can also define the generalization error as follows:
\begin{align} \label{GE}
\text{gen}(P_{W|S},\mu) \triangleq L_P(w,\mu)-L_E(w,s)
\end{align}
which quantifies how much the population risk deviates from the empirical risk. This generalization error is a random variable whose distribution depends on the randomized learning algorithm probabilistic law along with the (unknown) underlying data distribution. Therefore, an exact characterization of the behaviour of the generalization error -- such as its distribution -- is not possible for all learning algorithms.

%Our approach bypasses this challenge by offering a characterization of upper bounds to the moments of the generalization error given by:
%\begin{align}\label{Eq: Moments gen}
%    \overline{\text{gen}^m}(P_{W|S},\mu)\triangleq\mathbb{E}_{P_{W,S}}[ (\text{gen}(P_{W|S},\mu))^m]
%\end{align}
%{\color{red}and a characterization of upper bounds to some central moments of the generalization error given by:
%\begin{align}\label{Eq: Central Moments gen}
%    \overline{\text{gen}_c^m}(P_{W|S},\mu)\triangleq\mathbb{E}_{P_{W,S}}[ (\text{gen}(P_{W|S},\mu) - \overline{\text{gen}}(P_{W|S},\mu))^m]
%\end{align}}

%The variance of generalization error under distribution %$P_{W,S}$ could be defined as follows:
%\begin{align}
%&\text{Var}_{P_{W,S}}(\text{gen}(P_{W|S},\mu))=\overline{\text{gen}^2}(P_{W|S},\mu)-\overline{\text{gen}}(P_{W|S},\mu)^2
%\end{align}
% Our approach therefore generalizes existing approaches that characterize only a bound to the mean of the generalization error {\color{blue} \cite{xu2017information},\cite{russo2019much}}, thereby offering a more comprehensive view of the behaviour of learning algorithms. 

In order to bypass this challenge, our goal in the sequel will be to derive upper bounds to the moments of the generalization error given by:
\begin{align}\label{Eq: Moments gen}
    \overline{\text{gen}^m}(P_{W|S},\mu)\triangleq\mathbb{E}_{P_{W,S}}[ (\text{gen}(P_{W|S},\mu))^m]
\end{align}
in terms of various divergences and information-theoretic measures.
In particular, we will use the following divergence measures between two distributions $P_X$ and $P_{X^\prime}$ on a common measurable space $\mathcal{X}$:
\begin{itemize}
    \item The KL divergence given by:
    $$D_{KL}(P_X||P_{X^\prime})\triangleq\int_\mathcal{X}P_X(x)\log\left(\frac{P_X(x)}{P_{X^\prime}(x)}\right)dx $$
     \item The $\alpha$-R\'eyni divergence for $1\leq \alpha$ given by~\cite{van2014renyi}:
     $$D_{\alpha}(P_X||P_{X^\prime})\triangleq \frac{1}{\alpha-1}\log \left(\int_\mathcal{X} \left(\frac{P_{X}(x)}{P_{X^\prime}(x)}\right)^\alpha  P_{X^\prime}(x) dx \right)$$
     \item The power divergence of order $t$ given by~\cite{guntuboyina2013sharp}:
    $$D_P^{(t)}(P_X||P_{X^\prime})\triangleq \int_\mathcal{X} \left(\left(\frac{P_{X}(x)}{P_{X^\prime}(x)}\right)^t-1\right) P_{X^\prime}(x) dx$$
    where $D_{\alpha}(P_X||P_{X^\prime})=\frac{\log(D_P^{(t)}(P_X||P_{X^\prime})+1)}{\alpha-1}$ for $\alpha=t$ and $1\leq\alpha$.
    \item The Chi-square divergence given by~\cite{guntuboyina2013sharp}:
    $$\chi^2(P_X||P_{X^\prime})\triangleq \int_\mathcal{X}\frac{(P_X(x)-P_{X^\prime}(x))^2}{P_{X^\prime}(x)}dx$$
    where $\chi^2(P_X||P_{X^\prime})=D_{P}^{(2)}(P_X||P_{X^\prime})$.
\end{itemize}
We also use the following information measures between two random variables $X$ and $X'$ with joint distribution $P_{XX'}$ and marginals $P_X$ and $P_{X'}$:
\begin{itemize}
    \item The mutual information given by:
    $$I(X;X^\prime)\triangleq D_{KL}(P_{X,X^\prime}||P_X\otimes P_{X^\prime})$$
    \item The power information of order $t$ given by:
    $$I_P^{(t)}(X;X^\prime)\triangleq D_P^{(t)}(P_{X,X^\prime}||P_X\otimes P_{X^\prime})$$
    \item The Chi-square information given by:
    $$I_{\chi^2}(X;X^{\prime})\triangleq \chi^2(P_{X,X^\prime}||P_X\otimes P_{X^\prime})$$
    where $I_{P}^{(2)}(X;X^\prime)=I_{\chi^2}(X;X^{\prime})$
\end{itemize}

\section{Bounding Moments of Generalization Error }

We begin by offering a general result inspired from~\cite{alquier2018simpler} bounding the (absolute) expected value of an arbitrary function of the population and empirical risks under a joint measure in terms of the (absolute) expected value of the function of the population and empirical risks under the product measure.

\begin{theorem}\label{Theorem: Power-divergence phi^p}
Consider a measurable function $F(x,y): \mathbb{R}^2\rightarrow\mathbb{R}$. It follows that % and assume $\frac{1}{t}+\frac{1}{q}=1$ for $1< t,q$, the following upper bound holds on absolute value of an expected of arbitrary function of empirical and population risks:
\begin{align}\label{Eq: Power divergence moment upper bound}
    &\left|\mathbb{E}_{P_{W,S}}[F(L_P(W,\mu),L_E(W,S))]\right|\leq\\\nonumber &\quad\mathbb{E}_{P_{W}\otimes P_{S}}[|F(L_P(W,\mu),L_E(W,S))|^q]^{\frac{1}{q}}(I_P^{(t)}(W;S)+1)^{\frac{1}{t}}
\end{align}
where $t,q>1$ such that $\frac{1}{t}+\frac{1}{q}=1$, $P_S=\mu^{\otimes n}$ is the distribution of training set, and $P_W$ and $P_{W,S}$ are the distribution of hypothesis and joint distribution of hypothesis and training set induced by learning algorithm $P_{W|S}$.
\end{theorem}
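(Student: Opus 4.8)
The plan is to reduce the bound to a single change-of-measure step followed by Hölder's inequality. Write $G(W,S) \triangleq F(L_P(W,\mu), L_E(W,S))$, which is a measurable function of the pair $(W,S)$ since $L_P$ and $L_E$ are. Assuming $P_{W,S} \ll P_W \otimes P_S$ (otherwise the power-information term is infinite and the bound is vacuous), let $R(w,s) \triangleq \frac{dP_{W,S}}{d(P_W \otimes P_S)}(w,s) \geq 0$ denote the Radon--Nikodym derivative, i.e.\ the likelihood ratio between the joint and the product measures.

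First I would rewrite the target expectation under the product measure via the density ratio,
\[
\mathbb{E}_{P_{W,S}}[G(W,S)] = \mathbb{E}_{P_W \otimes P_S}\!\left[G(W,S)\, R(W,S)\right],
\]
and then pass to absolute values, using $R \geq 0$, to obtain $\left|\mathbb{E}_{P_{W,S}}[G]\right| \leq \mathbb{E}_{P_W \otimes P_S}[|G|\, R]$. The core step is to apply Hölder's inequality with the conjugate exponents $q$ and $t$ (so that $\tfrac{1}{q} + \tfrac{1}{t} = 1$),
\[
\mathbb{E}_{P_W \otimes P_S}[|G|\, R] \leq \mathbb{E}_{P_W \otimes P_S}\!\left[|G|^q\right]^{\frac{1}{q}}\, \mathbb{E}_{P_W \otimes P_S}\!\left[R^t\right]^{\frac{1}{t}},
\]
which already isolates the two factors appearing in the claimed bound.

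It then remains to identify the second factor with the power-information term. Applying the definition of the power divergence with $X=(W,S)$ distributed as the joint measure and $X'=(W,S)$ as the product measure gives $D_P^{(t)}(P_{W,S}\|P_W\otimes P_S) = \mathbb{E}_{P_W \otimes P_S}[R^t] - 1$, so that $\mathbb{E}_{P_W \otimes P_S}[R^t] = I_P^{(t)}(W;S) + 1$. Substituting this into the Hölder bound yields exactly \eqref{Eq: Power divergence moment upper bound}. The argument is short, and I do not anticipate a genuine obstacle: the only care required is measure-theoretic, namely ensuring the Radon--Nikodym derivative exists (the absolute-continuity assumption) and exploiting $R \geq 0$ so that the absolute value can be moved onto $|G|$ alone before invoking Hölder. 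The substantive insight is simply recognizing that the $t$-th moment of the joint-to-product density ratio is precisely $I_P^{(t)}(W;S) + 1$, which is what lets the otherwise generic Hölder estimate collapse into an information-theoretic quantity.
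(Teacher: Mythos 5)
Your proposal is correct and follows essentially the same route as the paper's own proof: a change of measure to the product distribution, Hölder's inequality with conjugate exponents $q$ and $t$, and the identification $\mathbb{E}_{P_W\otimes P_S}[R^t]=I_P^{(t)}(W;S)+1$. Your explicit handling of the absolute-continuity requirement and the nonnegativity of the density ratio is a minor added refinement, but the argument is the same.
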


\begin{proof}
See Appendix \ref{Proof Theorem: Power-divergence phi^p}.
% See longer manuscript version in~\cite{aminian2021moment}.
\end{proof}

%The result embodied in Theorem \ref{Theorem: Power-divergence phi^p} bounds the (absolute) expected value of an arbitrary function of the population and empirical risks under a joint measure in terms of the (absolute) expected value of the function of the population and empirical risks under the product measure, along with a power divergence.

Theorem \ref{Theorem: Power-divergence phi^p} can now be immediately used to bound the moments of the generalization error of a randomized learning algorithm in terms of a power divergence, under the common assumption that the loss function is $\sigma$-subgaussian.~\footnote{A random variable $X$ is $\sigma$-subgaussian if $E[e^{\lambda(X-E[X])}]\leq e^{\frac{\lambda^2 \sigma^2}{2}}$ for all $\lambda \in \mathbb{R}$.} In the rest of paper, we assume that the loss function $\ell(w,z)$ is $\sigma$- subgaussian under distribution $\mu$ for all $w\in\mathcal{W}$.

\begin{theorem}\label{Theorem: sigma subgassian for moments}
The $m$-th moment of the generalization error of a randomized learning algorithm obeys the bound given by:
\begin{align}\label{Eq: moments subguassian power divergence bound}
 \left| \overline{\text{gen}^m}(P_{W|S},\mu)\right| \leq
  \sigma^m (\frac{mq}{n})^{\frac{m}{2}} e^{m/e}  (I_P^{(t)}(W;S)+1)^{\frac{1}{t}}
\end{align}
provided that $t,q > 1$, $\frac{1}{t}+\frac{1}{q}=1$, $mq>2$ and $mq\in \mathbb{Z}^+$.
\end{theorem}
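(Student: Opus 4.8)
The plan is to obtain the statement as a direct specialization of Theorem~\ref{Theorem: Power-divergence phi^p}. First I would choose the test function $F(x,y)=(x-y)^m$, so that $F(L_P(W,\mu),L_E(W,S))=(\text{gen}(P_{W|S},\mu))^m$ and hence $\mathbb{E}_{P_{W,S}}[F(L_P(W,\mu),L_E(W,S))]=\overline{\text{gen}^m}(P_{W|S},\mu)$. Since $|F(x,y)|^q=|x-y|^{mq}$, Theorem~\ref{Theorem: Power-divergence phi^p} immediately yields
\begin{align}
\left|\overline{\text{gen}^m}(P_{W|S},\mu)\right|\leq \mathbb{E}_{P_W\otimes P_S}\!\left[|L_P(W,\mu)-L_E(W,S)|^{mq}\right]^{\frac{1}{q}}(I_P^{(t)}(W;S)+1)^{\frac{1}{t}}.
\end{align}
The whole task then reduces to controlling the $mq$-th absolute moment of the generalization error under the product measure $P_W\otimes P_S$.

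Next I would exploit the decoupling afforded by the product measure. Under $P_W\otimes P_S$ the hypothesis $W$ and the training set $S$ are independent, so conditioning on $W=w$ I would write
\begin{align}
L_P(w,\mu)-L_E(w,S)=\frac{1}{n}\sum_{i=1}^n\big(\mathbb{E}_{\mu}[\ell(w,Z)]-\ell(w,Z_i)\big),
\end{align}
which is a centered average of $n$ i.i.d.\ terms. Because $\ell(w,\cdot)$ is $\sigma$-subgaussian under $\mu$ for every $w$, each summand is zero-mean and $\sigma$-subgaussian, so the average is $\frac{\sigma}{\sqrt{n}}$-subgaussian. This holds uniformly in $w$, so any moment bound established for a fixed $w$ survives the outer expectation over $W\sim P_W$.

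The crux is a clean subgaussian moment inequality: for a zero-mean, $\sigma^\prime$-subgaussian random variable $X$ and any real $p\geq 2$,
\begin{align}
\mathbb{E}[|X|^p]^{\frac{1}{p}}\leq \sigma^\prime\sqrt{p}\,e^{1/e}.
\end{align}
I would prove this by inserting the tail bound $\mathbb{P}(|X|>u)\leq 2e^{-u^2/(2\sigma^{\prime 2})}$ into $\mathbb{E}[|X|^p]=\int_0^\infty p\,u^{p-1}\mathbb{P}(|X|>u)\,du$ and evaluating the integral via the substitution $u=\sigma^\prime\sqrt{2v}$, which gives $\mathbb{E}[|X|^p]\leq p\,2^{p/2}\sigma^{\prime p}\,\Gamma(p/2)$. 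Using the elementary bound $\Gamma(p/2)\leq (p/2)^{p/2}$, valid for $p\geq 2$ (this is where the hypothesis $mq>2$ enters), collapses the right-hand side to $\sigma^{\prime p}p^{p/2}\,p$, so $\mathbb{E}[|X|^p]^{1/p}\leq \sigma^\prime\sqrt{p}\,p^{1/p}$; the final factor is then handled by the observation that $p^{1/p}$ is maximized at $p=e$, giving $p^{1/p}\leq e^{1/e}$, which is precisely the source of the $e^{m/e}$ term.

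Finally I would combine the pieces with $\sigma^\prime=\sigma/\sqrt{n}$ and $p=mq$: the moment inequality gives $\mathbb{E}_{P_W\otimes P_S}[|L_P-L_E|^{mq}]\leq (\sigma/\sqrt{n})^{mq}(mq)^{mq/2}e^{mq/e}$, and raising to the power $1/q$ produces exactly $\sigma^m(mq/n)^{m/2}e^{m/e}$, which multiplied by $(I_P^{(t)}(W;S)+1)^{1/t}$ is the claimed bound. I expect the main obstacle to be pinning down the constant in the subgaussian moment inequality so that it matches the $e^{m/e}$ factor rather than a looser Stirling-type constant; the maximization $\max_{p>0}p^{1/p}=e^{1/e}$ is the key idea that makes it fall out cleanly. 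The integrality constraint $mq\in\mathbb{Z}^+$ plays no role in the subgaussian estimate, which is valid for all real $p\geq 2$; it is inherited from the moment structure of Theorem~\ref{Theorem: Power-divergence phi^p} and the choice $F(x,y)=(x-y)^m$.
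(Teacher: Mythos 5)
Your proposal is correct and follows essentially the same route as the paper's proof: specialize Theorem~\ref{Theorem: Power-divergence phi^p} to $F(x,y)=(x-y)^m$, observe that $L_P(w,\mu)-L_E(w,S)$ is $\frac{\sigma}{\sqrt{n}}$-subgaussian under $P_S$ uniformly in $w$, and control its $mq$-th absolute moment. The only difference is that where the paper invokes \cite[Lemma 1.4]{Philippe-MIT-2015} for the subgaussian moment bound, you re-derive it via tail integration, $\Gamma(p/2)\le (p/2)^{p/2}$, and $p^{1/p}\le e^{1/e}$ -- a faithful reconstruction of that lemma which, as you correctly note, works for all real $p\ge 2$ and thus even shows the integrality assumption $mq\in\mathbb{Z}^+$ (needed only because the cited lemma is stated for integer moments) is dispensable.
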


\begin{proof}
See Appendix \ref{Proof Theorem: sigma subgassian for moments}.
%See longer manuscript version in~\cite{aminian2021moment}.
\end{proof}

Theorem \ref{Theorem: sigma subgassian for moments} can also be immediately specialized to bound the moments of the generalization error of a randomized learning algorithm in terms of a chi-square divergence.

\begin{corollary}\label{Theorem: Chi-square moments}
The $m$-th moment of the generalization error of a randomized learning algorithm obeys the bound given by:
\begin{align}\label{Eq: Chi-square Upper bound for Moments}
  \left|\overline{\text{gen}^m}(P_{W|S},\mu)\right|\leq
  \sigma^m (\frac{2m}{n})^{\frac{m}{2}} e^{m/e}  \sqrt{I_{\chi^2}(W;S)+1}
\end{align}
\end{corollary}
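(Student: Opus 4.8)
The plan is to obtain the statement as a direct specialization of Theorem~\ref{Theorem: sigma subgassian for moments}, exploiting the identity recorded in the list of information measures above, namely that the Chi-square information is exactly the power information of order two, $I_P^{(2)}(W;S)=I_{\chi^2}(W;S)$. Since Theorem~\ref{Theorem: sigma subgassian for moments} holds for any H\"older-conjugate pair $t,q>1$ with $\frac{1}{t}+\frac{1}{q}=1$, the natural choice is to pick $t=2$, which forces $q=2$ through the conjugacy relation $\frac{1}{q}=1-\frac{1}{t}=\frac{1}{2}$.

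With this choice I would simply substitute $t=q=2$ into the bound \eqref{Eq: moments subguassian power divergence bound}. The exponent $\frac{1}{t}=\frac{1}{2}$ turns the factor $(I_P^{(t)}(W;S)+1)^{1/t}$ into $\sqrt{I_P^{(2)}(W;S)+1}=\sqrt{I_{\chi^2}(W;S)+1}$ via the identity above, while $q=2$ turns the prefactor $(\frac{mq}{n})^{m/2}$ into $(\frac{2m}{n})^{m/2}$. The subgaussian constant $\sigma^m$ and the term $e^{m/e}$ are unaffected, yielding precisely \eqref{Eq: Chi-square Upper bound for Moments}.

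The only thing to verify is that the admissibility conditions of Theorem~\ref{Theorem: sigma subgassian for moments} are met for this choice. Clearly $t=q=2>1$ and $\frac{1}{t}+\frac{1}{q}=1$, and for every positive integer $m$ the product $mq=2m$ is a positive integer. The sole subtlety --- and the nearest thing to an obstacle --- is the strict requirement $mq>2$: with $q=2$ this reads $m>1$, so the specialization applies verbatim for $m\geq 2$. The borderline case $m=1$ would require a separate argument and is in any event subsumed by the expected-generalization-error bounds, so no essential difficulty remains.
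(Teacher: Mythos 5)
Your proposal is correct and is essentially identical to the paper's own proof, which likewise obtains the corollary by setting $t=q=2$ in Theorem~\ref{Theorem: sigma subgassian for moments} and invoking $I_P^{(2)}(W;S)=I_{\chi^2}(W;S)$. Your observation that the inherited condition $mq>2$ excludes the case $m=1$ is a valid point of care that the paper itself glosses over (its corollary is stated without restricting $m$).
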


\begin{proof}
This corollary follows immediately by setting $t=q=2$ in Theorem \ref{Theorem: sigma subgassian for moments}.
\end{proof}

Interestingly, these moment bounds also appear to lead to a new average generalization error bound complementing existing ones in the literature.

\begin{corollary}\label{Corollary: sigma subgaussian assumption for average}
The average generalization error can be bounded as follows:
\begin{align}
  |\overline{\text{gen}}(P_{W|S},\mu)| \leq
  \sigma \sqrt{\frac{q}{n}} e^{1/e}  (I_P^{(t)}(W;S)+1)^{\frac{1}{t}}
\end{align}
provided that $q \geq 2$ for $q\in \mathbb{Z}^+$.
\end{corollary}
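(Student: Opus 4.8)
The plan is to specialize the moment bound of Theorem \ref{Theorem: sigma subgassian for moments} to the first moment, since by definition $\overline{\text{gen}}(P_{W|S},\mu) = \overline{\text{gen}^1}(P_{W|S},\mu)$. First I would set $m = 1$ in \eqref{Eq: moments subguassian power divergence bound}. The right-hand side becomes $\sigma (q/n)^{1/2} e^{1/e} (I_P^{(t)}(W;S)+1)^{1/t}$, which is exactly the claimed expression, and the hypothesis $mq > 2$ with $mq \in \mathbb{Z}^+$ collapses to $q > 2$ with $q \in \mathbb{Z}^+$, i.e. $q \geq 3$. Thus for every integer $q \geq 3$ the corollary follows immediately from Theorem \ref{Theorem: sigma subgassian for moments}.

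The remaining, boundary case is $q = 2$ (hence $t = 2$), which Theorem \ref{Theorem: sigma subgassian for moments} does not cover because there $mq = 2 \not> 2$. To handle it I would go back to the master inequality of Theorem \ref{Theorem: Power-divergence phi^p} and apply it with $F(x,y) = x - y$ and $q = t = 2$, giving
\begin{align}
    |\overline{\text{gen}}(P_{W|S},\mu)| \leq \mathbb{E}_{P_W \otimes P_S}[\,|\text{gen}(P_{W|S},\mu)|^2\,]^{1/2}\,(I_P^{(2)}(W;S)+1)^{1/2}. \nonumber
\end{align}
Under the product measure $P_W \otimes P_S$ the hypothesis and the training set are independent, so conditioned on $W = w$ the generalization error equals $-\frac{1}{n}\sum_{i=1}^n (\ell(w,Z_i) - L_P(w,\mu))$, a normalized sum of $n$ i.i.d. centered $\sigma$-subgaussian terms; it is therefore zero-mean and $\sigma/\sqrt{n}$-subgaussian, whence its second moment is at most $\sigma^2/n$. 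This yields $|\overline{\text{gen}}(P_{W|S},\mu)| \leq (\sigma/\sqrt{n})(I_P^{(2)}(W;S)+1)^{1/2}$, which is dominated by the claimed bound since $\sqrt{2}\,e^{1/e} \geq 1$, so the stated inequality holds for $q = 2$ as well.

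I expect the only genuine obstacle to be this boundary case $q = 2$. The clean substitution $m = 1$ into Theorem \ref{Theorem: sigma subgassian for moments} settles every integer $q \geq 3$, but the subgaussian moment estimate underlying that theorem degenerates when $mq = 2$, so the second-moment (variance) argument routed through Theorem \ref{Theorem: Power-divergence phi^p} is what closes the gap and recovers the full range $q \geq 2$ asserted in the statement. Everything else is routine specialization, and no new probabilistic input beyond the already-assumed $\sigma$-subgaussianity of the loss is required.
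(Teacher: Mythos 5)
Your proposal is correct, and its core step is exactly the paper's own proof: the authors prove Corollary~\ref{Corollary: sigma subgaussian assumption for average} by the one-line remark ``set $m=1$ in Theorem~\ref{Theorem: sigma subgassian for moments}.'' What you add beyond the paper is the treatment of the boundary case $q=2$, and this is a genuine (if minor) improvement: as you observe, with $m=1$ the hypothesis $mq>2$ of Theorem~\ref{Theorem: sigma subgassian for moments} only covers integers $q\geq 3$, so the paper's one-line proof does not literally justify the stated range $q\geq 2$. Your patch --- going back to Theorem~\ref{Theorem: Power-divergence phi^p} with $F(x,y)=x-y$ and $t=q=2$, and bounding $\mathbb{E}_{P_W\otimes P_S}[\mathrm{gen}^2]\leq \sigma^2/n$ because, for fixed $w$, the generalization error is a zero-mean $\sigma/\sqrt{n}$-subgaussian average under $P_S$ --- is sound, and indeed yields a constant $(1)$ strictly better than the claimed $\sqrt{2}\,e^{1/e}$, so the $q=2$ case of the corollary follows. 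An alternative, arguably more economical, way to close the same gap is to note that the moment estimate underlying Theorem~\ref{Theorem: sigma subgassian for moments} (Lemma~1.4 of the cited lecture notes, i.e.\ $\mathbb{E}[|X|^k]^{1/k}\leq \sigma e^{1/e}\sqrt{k}$) is actually valid for all integers $k\geq 2$, so the condition $mq>2$ in the theorem is stated conservatively and could be relaxed to $mq\geq 2$, which would make the corollary an immediate specialization over its full stated range. Either way, your write-up is complete and, unlike the paper's, leaves no case unaccounted for.
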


\begin{proof}
This corollary follows immediately by setting $m=1$ in Theorem \ref{Theorem: sigma subgassian for moments}.
\end{proof}

Note that the chi-square information based expected generalization error upper bound is looser than the mutual information based  counterpart in~\cite{xu2017information}.
% However, for $t \rightarrow 1$ in Corollary~\ref{Corollary: sigma subgaussian assumption for average}, if $\frac{t}{2(t-1)}e^{2/e}<<I(W;S)$, the power information of order $t$ based bound is tighter than the mutual information based one \cite{xu2017information}.

It is also interesting to reflect about how the generalization error moment bounds decay as a function of the training set size ingested by the learning algorithm. In general, information measures such as power information and chi-square information do not have to be finite, but these information measures can be shown to obey $0 \leq I_{P}^{(t)}(W;S)\leq R^t-1$ and $0 \leq I_{\chi^2}(W;S)\leq R^2-1$, respectively, provided that~\footnote{This condition holds provided that $\mathcal{W}\times \mathcal{S}$ is countable.}
$$\max_{(w,s)\in \mathcal{W}\times\mathcal{S}}\frac{P_{W|S}(w,s)}{P_W(w)}=R < \infty.$$ It follows immediately that the moments of the generalization error are governed by the upper bound given by:
\begin{align}\label{Eq: Chi-square Upper bound for Moments using Bounded}
  \left|\overline{\text{gen}^m}(P_{W|S},\mu)\right|\leq
  \sigma^m (\frac{2m}{n})^{\frac{m}{2}} e^{m/e} R
\end{align}
exhibiting a decay rate of the order $\mathcal{O}(\sqrt{\frac{1}{n^m}})$. Naturally, with the increase in the training set size, one would expect the empirical risk to concentrate around the population risk, and our bounds hint at the speed of such convergence.

It is also interesting to reflect about the tightness of the various generalization error moment bounds. In particular, in view of the fact that it may not be possible to compare directly information measures such as power information and chi-square information, the following Proposition puts forth conditions allowing one to compare the tightness of the bounds portrayed in Theorem \ref{Theorem: sigma subgassian for moments} and Corollary~\ref{Theorem: Chi-square moments} under the condition that the randomized learning algorithm ingests $n$ i.i.d. input-output data examples.

\begin{proposition}\label{Proposition: compare power to chi square}
The power information of order $t$ generalization error $m$-th moment upper bound
\begin{align}
 \left| \overline{\text{gen}^m}(P_{W|S},\mu)\right| \leq
  \sigma^m (\frac{mq}{n})^{\frac{m}{2}} e^{m/e}  (I_P^{(t)}(W;S)+1)^{\frac{1}{t}}
\end{align}
is looser than the chi-sqare information based bound
\begin{align}
  \left|\overline{\text{gen}^m}(P_{W|S},\mu)\right|\leq
  \sigma^m (\frac{2m}{n})^{\frac{m}{2}} e^{m/e}  \sqrt{I_{\chi^2}(W;S)+1}
\end{align}
provided that $(\frac{2(t-1)}{t})^{\frac{mt(t-1)}{(t-2)}}-1 \leq I_P^{(t)}(W;S)$ for $t>2$ with $\frac{mt}{(t-1)}\in \mathbb{Z}^+$.
\end{proposition}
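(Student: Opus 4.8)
The plan is to show that the stated lower bound on $I_P^{(t)}(W;S)$ is precisely the condition under which the right-hand side of the power-information bound dominates the right-hand side of the chi-square bound. First I would cancel the factors common to both bounds. Since $\left(\frac{mq}{n}\right)^{m/2}=\left(\frac{m}{n}\right)^{m/2}q^{m/2}$ and $\left(\frac{2m}{n}\right)^{m/2}=\left(\frac{m}{n}\right)^{m/2}2^{m/2}$, the multiplicative factor $\sigma^m e^{m/e}\left(\frac{m}{n}\right)^{m/2}$ is present in both expressions, so proving that the power bound is looser reduces to establishing
\[
q^{m/2}\,\bigl(I_P^{(t)}(W;S)+1\bigr)^{1/t}\;\geq\;2^{m/2}\,\sqrt{I_{\chi^2}(W;S)+1}.
\]

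Next I would relate the two information measures through the monotonicity of the R\'enyi divergence in its order. Using the identity $D_\alpha(P\|Q)=\frac{1}{\alpha-1}\log\bigl(D_P^{(\alpha)}(P\|Q)+1\bigr)$ applied to $P_{W,S}$ and $P_W\otimes P_S$, one has $D_2=\log\bigl(I_{\chi^2}(W;S)+1\bigr)$ and $D_t=\frac{1}{t-1}\log\bigl(I_P^{(t)}(W;S)+1\bigr)$. Because $D_\alpha$ is non-decreasing in $\alpha$ and $t>2$, it follows that $D_2\leq D_t$, which upon exponentiating yields
\[
I_{\chi^2}(W;S)+1\;\leq\;\bigl(I_P^{(t)}(W;S)+1\bigr)^{1/(t-1)},
\]
and hence $\sqrt{I_{\chi^2}(W;S)+1}\leq\bigl(I_P^{(t)}(W;S)+1\bigr)^{1/(2(t-1))}$. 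Consequently it suffices to prove the stronger inequality $q^{m/2}\bigl(I_P^{(t)}(W;S)+1\bigr)^{1/t}\geq 2^{m/2}\bigl(I_P^{(t)}(W;S)+1\bigr)^{1/(2(t-1))}$, which now involves only the single quantity $A\triangleq I_P^{(t)}(W;S)+1\geq 1$.

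Finally I would isolate $A$. Taking logarithms, the inequality becomes $\frac{m}{2}\log(q/2)+\left(\frac{1}{t}-\frac{1}{2(t-1)}\right)\log A\geq 0$, where the coefficient of $\log A$ simplifies to $\frac{t-2}{2t(t-1)}>0$ for $t>2$. Substituting $q=\frac{t}{t-1}$, so that $q<2$ and $\log(q/2)=-\log\!\frac{2(t-1)}{t}<0$, and rearranging gives the threshold $A\geq\left(\frac{2(t-1)}{t}\right)^{mt(t-1)/(t-2)}$, i.e. $I_P^{(t)}(W;S)\geq\left(\frac{2(t-1)}{t}\right)^{mt(t-1)/(t-2)}-1$, which is exactly the stated hypothesis; the integrality requirement $\frac{mt}{t-1}=mq\in\mathbb{Z}^+$ is inherited from Theorem~\ref{Theorem: sigma subgassian for moments}. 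The main obstacle is the correct invocation of R\'enyi-order monotonicity together with the translation between the power-divergence and R\'enyi parametrizations; once the chi-square term is controlled by a power of $A$, the remaining steps are elementary, with care needed only to track that $q<2$ reverses the sign of $\log(q/2)$ and thereby turns the comparison into a \emph{lower} bound on $I_P^{(t)}(W;S)$.
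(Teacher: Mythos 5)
Your proof is correct and follows essentially the same route as the paper's: both chain the chi-square quantity through the intermediate bound $\sqrt{I_{\chi^2}(W;S)+1}\leq (I_P^{(t)}(W;S)+1)^{1/(2(t-1))}$ for $t>2$ and then show, with $q=\tfrac{t}{t-1}$, that the remaining comparison of the constants $2^{m/2}$ versus $q^{m/2}$ is equivalent to the stated threshold $\bigl(\tfrac{2(t-1)}{t}\bigr)^{mt(t-1)/(t-2)}-1\leq I_P^{(t)}(W;S)$. The only difference is in how the intermediate inequality is justified: the paper cites it directly from \cite[Corollary~5.6]{guntuboyina2013sharp}, whereas you derive it from the monotonicity of the R\'enyi divergence in its order combined with the power-divergence/R\'enyi correspondence, which is a valid and self-contained proof of the same lemma.
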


\begin{proof}
See Appendix \ref{Proof Proposition: compare power to chi square}.
\end{proof}

For example, it turns out $1.34^{12}-1\leq I_P^{(3)}(W;S)$ guarantees a chi-square information based generalization error second moment bound to be tighter than the power information of order 3 based bound.

Finally, we offer an additional bound -- applicable only to the second moment of the generalization error -- leveraging an alternative proof route inspired by tools put forth in~\cite[Proposition 2]{russo2019much}.  It does not appear that \cite[Proposition 2]{russo2019much} can be used to generate generalization error higher-order moment bounds in closed form.

\begin{theorem}\label{Theorem: russo approach}
The second moment of the generalization error of a randomized learning algorithm can be bounded as follows:
\begin{equation}\label{Eq: Second moment based on Mutual information}
    \overline{\text{gen}^2}(P_{W|S},\mu)\leq \frac{\sigma^2}{n}\left(16 I(W;S)+9\right)
\end{equation}
\end{theorem}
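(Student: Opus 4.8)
The plan is to pass from the joint law $P_{W,S}$ to the product law $P_W\otimes P_S$ by a single change-of-measure step, which is precisely where the mutual information enters, and then to control the moment-generating function of $\text{gen}^2$ under the product law. The reason this route is confined to the second moment is that $\text{gen}^2$ is sub-exponential, hence has a finite moment-generating function in a neighbourhood of the origin, whereas $\text{gen}^m$ for $m\ge 3$ has too heavy a tail for its moment-generating function to be finite; this is exactly the obstruction noted immediately after the statement.

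First I would invoke the Donsker--Varadhan variational representation of the Kullback--Leibler divergence, the tool underlying \cite[Proposition 2]{russo2019much}: for every $\lambda>0$ in the range where the expectation below is finite,
\begin{align}
\lambda\,\mathbb{E}_{P_{W,S}}[\text{gen}^2] &\le D_{KL}(P_{W,S}\,\|\,P_W\otimes P_S) + \log \mathbb{E}_{P_W\otimes P_S}[e^{\lambda\,\text{gen}^2}] \nonumber\\
&= I(W;S) + \log \mathbb{E}_{P_W\otimes P_S}[e^{\lambda\,\text{gen}^2}],
\end{align}
where the equality uses $I(W;S)=D_{KL}(P_{W,S}\|P_W\otimes P_S)$. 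It then remains only to bound the log-moment-generating function of $\text{gen}^2$ under the product measure.

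Next I would establish that, under $P_W\otimes P_S$, the random variable $\text{gen}=L_P(W,\mu)-L_E(W,S)$ is zero-mean and $\tfrac{\sigma}{\sqrt n}$-subgaussian. Indeed, conditioning on $W=w$ makes $S$ independent of $w$, so $L_E(w,S)=\tfrac1n\sum_{i=1}^n \ell(w,Z_i)$ is an average of $n$ i.i.d.\ $\sigma$-subgaussian terms, each with mean $L_P(w,\mu)$; hence $L_E(w,S)$ is $\tfrac{\sigma}{\sqrt n}$-subgaussian with mean $L_P(w,\mu)$, and $\text{gen}$ is centred and $\tfrac{\sigma}{\sqrt n}$-subgaussian, uniformly in $w$. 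A routine even-moment series expansion of $e^{\lambda\,\text{gen}^2}$ using the subgaussian moment bounds then yields a closed-form estimate of the type $\mathbb{E}_{P_W\otimes P_S}[e^{\lambda\,\text{gen}^2}] \le \tfrac{1+2\lambda\sigma^2/n}{1-2\lambda\sigma^2/n}$, valid for $0<\lambda<\tfrac{n}{2\sigma^2}$.

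Finally I would substitute this estimate into the change-of-measure inequality and choose $\lambda$ conveniently. Writing $v=2\lambda\sigma^2/n\in(0,1)$ turns the right-hand side into $\tfrac{\sigma^2}{n}\cdot\tfrac{2}{v}(I(W;S)+\log\tfrac{1+v}{1-v})$; selecting $v$ to a fixed constant (for instance $v=\tfrac18$, which produces the coefficient $16$ on $I(W;S)$) and bounding the residual $\tfrac{2}{v}\log\tfrac{1+v}{1-v}$ by the stated constant delivers $\overline{\text{gen}^2}(P_{W|S},\mu)\le \tfrac{\sigma^2}{n}(16\,I(W;S)+9)$. The \textbf{main obstacle} is the middle step: obtaining the sub-exponential moment-generating-function estimate with explicit constants and then tuning $\lambda$, subject to the finiteness constraint $\lambda<n/(2\sigma^2)$, so that the two contributions collapse into exactly the claimed $16\,I(W;S)+9$ form. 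This same finiteness constraint is what prevents the method from extending to $m\ge 3$, since $\text{gen}^m$ is then no longer sub-exponential.
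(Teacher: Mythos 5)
Your proposal is correct, and its skeleton coincides with the paper's proof: a Donsker--Varadhan change of measure applied to $\lambda\,\text{gen}^2$, subgaussianity of $\text{gen}$ under $P_W\otimes P_S$ with parameter $\sigma/\sqrt{n}$, and a choice of $\lambda$ that balances the terms (your $v=\tfrac18$ is exactly the paper's $\lambda=\tfrac{n}{16\sigma^2}$). The difference lies in the middle estimate. The paper invokes a cited lemma stating that the centred square of a $\tfrac{\sigma}{\sqrt n}$-subgaussian variable is $\bigl(\tfrac{256\sigma^4}{n^2},\tfrac{16\sigma^2}{n}\bigr)$-subexponential, adds $\mathbb{E}_{P_S}[\text{gen}^2]\le\tfrac{\sigma^2}{n}$, and obtains $\log\mathbb{E}_{P_W\otimes P_S}[e^{\lambda\,\text{gen}^2}]\le \tfrac{128\lambda^2\sigma^4}{n^2}+\tfrac{\lambda\sigma^2}{n}$ on $|\lambda|\le\tfrac{n}{16\sigma^2}$; evaluating at the boundary gives exactly $16I(W;S)+9$. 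You instead bound the moment generating function directly by a geometric series of subgaussian even moments, which is self-contained, valid on the wider range $\lambda<\tfrac{n}{2\sigma^2}$, and yields the residual $\tfrac{2}{v}\log\tfrac{1+v}{1-v}=16\log\tfrac97\approx 4.02$ at $v=\tfrac18$ --- strictly tighter than the paper's constant $9$, which you then relax to match the stated bound. (One small check to make explicit in a full write-up: the series $\sum_{j\ge1}\tfrac{\lambda^j}{j!}\mathbb{E}[\text{gen}^{2j}]\le 2\sum_{j\ge1}(2\lambda\sigma^2/n)^j$ uses the moment bound $\mathbb{E}[X^{2j}]\le 2(2\sigma^2)^j j!$ for subgaussian $X$, which is the same Lemma 1.4 of the lecture notes the paper cites elsewhere.) Your closing observation about why the method stops at $m=2$ --- that $e^{\lambda\,\text{gen}^m}$ is not integrable for $m\ge3$ --- agrees with the paper's remark following the theorem.
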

\begin{proof}
See Appendix \ref{Proof Theorem: russo approach}.
% See longer manuscript version in%~\ref{Proof Theorem: russo approach}
% ~\cite{aminian2021moment}.
\end{proof}

The next proposition showcases that under certain conditions the mutual information based second moment bound can be tighter than the chi-square information bound.
 %{\color{red} Can we argue -- and show in the numerical example via numerical computation -- that the converse is also true?}
%{\color{blue} In the numerical results also mutual information based upper bound is tighter than the chi-square information based upper bound.}

\begin{proposition}\label{Proposition: 2nd moment chisquare vs mutual information}
 The second moment of generalization error upper based on Chi-square information
\begin{align}
 \overline{\text{gen}^2}(P_{W|S},\mu)\leq
   \frac{\sigma^2}{n} 4 e^{2/e}  \sqrt{I_{\chi^2}(W;S)+1}
\end{align}
is looser than the upper bound based on mutual information in Theorem~\ref{Theorem: russo approach},
\begin{equation}
    \overline{\text{gen}^2}(P_{W|S},\mu)\leq \frac{\sigma^2}{n}\left(16 I(W;S)+9\right)
\end{equation}
provided that $94 \leq I_{\chi^2}(W;S)$. 
\end{proposition}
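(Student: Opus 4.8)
The plan is to read the proposition as a comparison of the two displayed upper bounds: since both the chi-square expression $\tfrac{\sigma^2}{n}4e^{2/e}\sqrt{I_{\chi^2}(W;S)+1}$ and the mutual information expression $\tfrac{\sigma^2}{n}(16 I(W;S)+9)$ are valid upper bounds on $\overline{\text{gen}^2}(P_{W|S},\mu)$, declaring the former \emph{looser} is exactly the assertion that it is the larger of the two. After cancelling the common factor $\sigma^2/n$, it therefore suffices to prove
\[
16\, I(W;S) + 9 \;\le\; 4 e^{2/e}\sqrt{I_{\chi^2}(W;S)+1}.
\]
The obstacle to attacking this directly is that $I(W;S)$ and $I_{\chi^2}(W;S)$ sit on opposite sides and are in general incomparable, so the first and decisive step is to eliminate $I(W;S)$ in favour of $I_{\chi^2}(W;S)$.

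To do this I would invoke the standard relationship between the Kullback--Leibler and chi-square divergences, namely $D_{KL}(P\|Q)\le \log\!\big(1+\chi^2(P\|Q)\big)$ (equivalently $\chi^2(P\|Q)\ge e^{D_{KL}(P\|Q)}-1$, which follows from Jensen's inequality applied to the concave map $\log$). Taking $P=P_{W,S}$ and $Q=P_W\otimes P_S$ yields $I(W;S)\le \log\!\big(I_{\chi^2}(W;S)+1\big)$. Substituting this into the left-hand side, it is enough to establish the stronger inequality
\[
16\,\log\!\big(I_{\chi^2}(W;S)+1\big) + 9 \;\le\; 4 e^{2/e}\sqrt{I_{\chi^2}(W;S)+1}.
\]

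The remaining work reduces to a one-variable estimate. Writing $x=I_{\chi^2}(W;S)+1$ and $g(x)=4e^{2/e}\sqrt{x}-16\log x-9$, I would show $g(x)\ge 0$ for all $x$ in the relevant range $x\ge 95$ (i.e. $I_{\chi^2}(W;S)\ge 94$). Differentiating gives $g'(x)=\big(2e^{2/e}\sqrt{x}-16\big)/x$, which is positive as soon as $\sqrt{x}>8e^{-2/e}$, a threshold far below $95$; hence $g$ is increasing on $[95,\infty)$, and it only remains to verify the single boundary value $g(95)\ge 0$ to certify the claim for all larger $x$.

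The main obstacle is the first step: one must select precisely the logarithmic refinement $D_{KL}\le\log(1+\chi^2)$ rather than the cruder bound $D_{KL}\le\chi^2$. The latter would leave a term linear in $I_{\chi^2}(W;S)$ on the left competing against a square root on the right, so the inequality would eventually reverse and no finite threshold could exist; the logarithmic bound is exactly what matches the two growth rates ($\log$ versus $\sqrt{\cdot}$) and thereby produces a finite crossover point. Once that relationship is in place, the monotonicity of $g$ and the boundary check are routine.
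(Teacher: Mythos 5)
Your route is exactly the paper's own: the paper likewise cancels the common factor $\sigma^2/n$, invokes the inequality $I(W;S)\le\log\left(I_{\chi^2}(W;S)+1\right)$ (cited there to~\cite{dragomir2000some}), and then asserts the scalar inequality $16\log\left(I_{\chi^2}(W;S)+1\right)+9\le 4e^{2/e}\sqrt{I_{\chi^2}(W;S)+1}$ for $I_{\chi^2}(W;S)\ge 94$. Your only addition is the monotonicity argument for $g(x)=4e^{2/e}\sqrt{x}-16\log x-9$, which the paper omits; that derivative computation is correct ($g'(x)>0$ once $\sqrt{x}>8e^{-2/e}\approx 3.83$, far below the relevant range).

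However, the boundary check you defer actually fails, and this is where both your proof and the paper's break down. Numerically $4e^{2/e}\approx 8.348$ and $\sqrt{95}\approx 9.747$, so $4e^{2/e}\sqrt{95}\approx 81.37$, while $16\log 95+9\approx 81.86$; hence $g(95)\approx -0.49<0$. By your own monotonicity argument the scalar inequality only holds beyond the true crossover, which lies between $x=96$ (where $g\approx -0.23$) and $x=97$ (where $g\approx +0.03$); the sufficient condition should therefore read $I_{\chi^2}(W;S)\ge 96$ rather than $94$. The slip is not merely an artifact of the proof technique: taking, for instance, $n=1$, $Z$ uniform on $95$ atoms and the algorithm $W=Z$ gives $I(W;S)=\log 95$ and $I_{\chi^2}(W;S)=94$, so that $16\,I(W;S)+9\approx 81.86$ strictly exceeds $4e^{2/e}\sqrt{I_{\chi^2}(W;S)+1}\approx 81.37$, i.e.\ the comparison of the two bounds genuinely fails at the stated threshold. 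In short: same approach as the paper, correct skeleton, but actually carrying out the final step you postponed would have revealed that the constant $94$ must be raised to $96$.
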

\begin{proof}
See Appendix \ref{Proof Proposition: 2nd moment chisquare vs mutual information}.
% See longer manuscript version in%~\ref{Proof Proposition: 2nd moment chisquare vs mutual information}
% ~\cite{aminian2021moment}.
\end{proof}

\section{From Moments to High Probability Bounds}

We now showcase how to use the moment upper bounds to bound the probability that the empirical risk deviates from the population risk by a certain amount, under a \textit{single-draw} scenario where one draws a single hypothesis based on the training data \cite{hellstrom2020generalization}.
% {\color{red} describe succinctly the single-draw scenario} 
%{\color{blue} an upper bound on $|\text{gen}(P_{W|S},\mu)|$ under the joint distribution $P_{W,S}$ with probability at least $1-\delta$ for $\delta \in (0,1)$.}

Concretely, our following results leverage generalization error moment bounds to construct a generalization error high-probability bound. In particular, we offer a single draw upper bound on the generalization error by leveraging Theorem 2 in conjunction with Markov's inequality, that can be further optimized with respect to the moment order.

%As we derived different upper bounds for moments of the generalization error, we could use these upper bounds to provide some new results for generalization error and its deviation from the mean.

%\begin{remark}\label{Corollary: average Variance Upper bound}
%Under the setting of Theorem~\ref{Theorem: sigma subgassian for moments}, the following upper bound holds for variance of generalization error respect to $P_{W,S}$ distribution:
%\begin{align}
%    &\text{Var}_{P_{W,S}}(\text{gen}(P_{W|S},\mu))\leq\\\nonumber
%    &\frac{\sigma^2}{n}\min\left(\left(16 I(W;S)+9\right), 4e^{2/e}  \sqrt{I_{\chi^2}(W;S)+1}\right)
%\end{align}
%\end{remark}

%It is worthwhile to see We provide a probability bound for the absolute difference of generalization error and its average under $P_{W,S}$ as follows:
%\begin{proposition}\label{Proposition: chebyshof bound}
%Considering the setting of Theorem~\ref{Theorem: sigma subgassian for moments}, the following holds for $0\leq \delta \leq 1$ with probability at least $1-\delta$, for any $\delta\in(0,1)$ and under %\begin{align}
%   &|\text{gen}(P_{W|S},\mu)-\overline{\text{gen}}(P_{W|S},\mu)|\leq \\\nonumber
%     &\quad \frac{\sigma}{\sqrt{n\delta}} \min \left(\sqrt{16 I(W;S)+9},2  e^{1/e} \sqrt[4]{(I_{\chi^2}(W;S)+1)}\right)
%\end{align}
%\end{proposition}
%\begin{proof}
%See Appendix \ref{Proof %Proposition: chebyshof bound}.
%\end{proof}

\begin{theorem}\label{Theorem: high probabilities by moments}
 It follows that with probability at least $1-\delta$ for some $\delta\in(0,1)$, $t>1$, by considering $\beta=\frac{1}{t-1}\log\left(\frac{I_{P}^{(t)}(W;S)+1}{\delta^t}\right)$ and under distribution $P_{W,S}$ the generalization error obeys:
\begin{align}\label{Eq: single draw by moment bound new}
    &|\text{gen}(P_{W|S},\mu)|\leq \\\nonumber
    &\quad  e^{1/e+1/2}\sqrt{\frac{2t\sigma^2}{n(t-1)}} \sqrt{\log\left(\sqrt[t]{I_{P}^{(t)}(W;S)+1}\right)+\log(\frac{1}{\delta})} 
\end{align}
provided that $2<\beta$ and $\beta\in \mathbb{Z}^+$.
\end{theorem}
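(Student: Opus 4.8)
The plan is to feed the moment bound of Theorem~\ref{Theorem: sigma subgassian for moments} into Markov's inequality and then optimise the resulting tail estimate over the moment order $m$. One preliminary point: Theorem~\ref{Theorem: sigma subgassian for moments} controls $|\overline{\text{gen}^m}|$, but a Markov argument on $|\text{gen}|$ requires the absolute moment $\mathbb{E}_{P_{W,S}}[|\text{gen}(P_{W|S},\mu)|^m]$, which for odd $m$ differs from $|\overline{\text{gen}^m}|$. I would therefore first note that rerunning the chain of Theorem~\ref{Theorem: Power-divergence phi^p} with $F(x,y)=|x-y|^m$ (so that $|F|^q=|x-y|^{mq}$) and bounding the product-measure absolute moment $\mathbb{E}_{P_W\otimes P_S}[|\text{gen}|^{mq}]$ exactly as in the proof of Theorem~\ref{Theorem: sigma subgassian for moments} yields the identical right-hand side for the absolute moment, namely $\mathbb{E}_{P_{W,S}}[|\text{gen}|^m]\le \sigma^m(mq/n)^{m/2}e^{m/e}(I_P^{(t)}(W;S)+1)^{1/t}$.

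Next I apply Markov's inequality to the non-negative variable $|\text{gen}|^m$: for any $\epsilon>0$, $P(|\text{gen}|\ge\epsilon)\le \mathbb{E}[|\text{gen}|^m]/\epsilon^m$. Setting the right-hand side equal to $\delta$ and solving for $\epsilon$ shows that, with probability at least $1-\delta$, $|\text{gen}(P_{W|S},\mu)|\le \sigma\sqrt{mq/n}\,e^{1/e}\,A^{1/m}$, where $A:=(I_P^{(t)}(W;S)+1)^{1/t}/\delta$.

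The heart of the argument is the optimisation over the (free) moment order. Writing the bound as $\sigma\sqrt{q/n}\,e^{1/e}\sqrt{m}\,A^{1/m}$ and minimising $\sqrt{m}\,A^{1/m}$, the first-order condition $\frac{1}{2m}-\frac{\log A}{m^2}=0$ gives $m^*=2\log A$. At this value $A^{1/m^*}=e^{1/2}$, while $m^* q=2q\log A=2\beta$ with $\beta=q\log A=\frac{1}{t-1}\log\!\big((I_P^{(t)}(W;S)+1)/\delta^t\big)$; substituting, using $q=t/(t-1)$ and $\log A=\log\big(\sqrt[t]{I_P^{(t)}(W;S)+1}\big)+\log(1/\delta)$, collapses the estimate to the stated closed form $e^{1/e+1/2}\sqrt{\frac{2t\sigma^2}{n(t-1)}}\sqrt{\log(\sqrt[t]{I_P^{(t)}(W;S)+1})+\log(1/\delta)}$. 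Observe that the factor $e^{1/e}$ is inherited from Theorem~\ref{Theorem: sigma subgassian for moments}, whereas the extra $e^{1/2}$ is precisely the value of $A^{1/m}$ at the optimum.

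The main obstacle is reconciling this continuous optimiser with the integrality and positivity side conditions demanded by Theorem~\ref{Theorem: sigma subgassian for moments}, namely $mq\in\mathbb{Z}^+$ and $mq>2$. Because $m^* q=2\beta$, imposing $\beta\in\mathbb{Z}^+$ makes $m^* q$ an even positive integer and $\beta>2$ makes $m^* q=2\beta>2$; these are exactly the hypotheses $2<\beta$, $\beta\in\mathbb{Z}^+$ in the statement, and they certify that the moment bound may legitimately be invoked at the optimising order (note that $m^*$ itself need not be an integer, only the product $mq$). The only remaining care is to confirm that $m^*$ is indeed a global minimiser: since $A>1$ under $\beta>2$ (as $\log A=\beta/q>0$), the objective $\sqrt m\,A^{1/m}$ tends to $+\infty$ both as $m\to 0^+$ and as $m\to\infty$, so the interior critical point $m^*$ is the minimum, after which the displayed constant $\sqrt{2t/(n(t-1))}$ follows by routine algebra.
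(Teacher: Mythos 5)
Your proposal is correct and follows essentially the same route as the paper's own proof: Markov's inequality applied to a power of the generalization error, the moment bound of Theorem~\ref{Theorem: sigma subgassian for moments} plugged in, and then a continuous optimization over the moment order whose critical point produces the $e^{1/2}$ factor and identifies the optimal order with $2\beta$, so that the stated conditions $\beta>2$, $\beta\in\mathbb{Z}^+$ certify admissibility. The only difference is in handling signs: the paper applies Markov to the even power $\text{gen}^{2m}$, so the signed moment automatically equals the absolute moment, whereas you work with a general order $m$ and patch the odd-moment issue by rerunning Theorem~\ref{Theorem: Power-divergence phi^p} with $F(x,y)=|x-y|^m$ --- a valid fix that addresses explicitly a point the paper sidesteps by parametrization.
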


\begin{proof}
See Appendix \ref{Proof Theorem: high probabilities by moments}.
\end{proof}
\begin{remark}
The characterization in Theorem~\ref{Theorem: high probabilities by moments} can also be expressed in terms of $\alpha$-R\'eyni divergence, by considering $\beta=\frac{\alpha}{\alpha-1}\log(\frac{1}{\delta})+D_{\alpha}(P_{W,S}||P_W\otimes P_S)$, as follows:
\begin{align}\label{Eq: single draw by moment bound new alpha}
    &|\text{gen}(P_{W|S},\mu)|\leq \\\nonumber
    &\quad  e^{1/e+1/2}\sqrt{\frac{2\sigma^2\left(D_{\alpha}(P_{W,S}||P_W\otimes P_S)+\log(\frac{1}{\delta})\right)}{n}} 
\end{align}
provided that $2<\beta$ and $\beta \in \mathbb{Z}^+$.
\end{remark}
In \cite[Corollary~4]{hellstrom2020generalization}, a single draw upper bound is proposed which depends on two terms of $\alpha$-R\'eyni divergences and the term $\frac{4 \sigma^2}{n}\log(\frac{2}{\delta})$. Our upper bound, \eqref{Eq: single draw by moment bound new alpha}, depends on the  $\alpha$-R\'eyni divergence and also a smaller term  $\frac{2 \sigma^2}{n}\log(\frac{1}{\delta})$.
\begin{corollary}\label{Corollary: chi square result high probability bound}
 It follows that with probability at least $1-\delta$ for some $\delta\in(0,1)$, by considering $\beta=\log\left(\frac{I_{\chi^2}(W;S)+1}{\delta^2}\right)$ and under distribution $P_{W,S}$ the generalization error obeys:
\begin{align}\label{Eq: single draw by moment bound new chi-square}
    &|\text{gen}(P_{W|S},\mu)|\leq \\\nonumber
    &\quad  e^{1/e+1/2}2\sigma \sqrt{\frac{\log\left(\sqrt{I_{\chi^2}(W;S)+1}\right)+\log(\frac{1}{\delta})}{n}} 
\end{align}
provided that $2<\beta$ and $\beta \in \mathbb{Z}^+$.
\end{corollary}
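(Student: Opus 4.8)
The plan is to obtain this corollary as a direct specialization of Theorem~\ref{Theorem: high probabilities by moments} to the conjugate exponent $t=2$. First I would invoke the identity $I_P^{(2)}(W;S)=I_{\chi^2}(W;S)$, recorded among the definitions of the information measures, so that the power information of order two appearing in the parent theorem is exactly the chi-square information. Since $t=2>1$, the hypotheses of Theorem~\ref{Theorem: high probabilities by moments} are met, and the remaining conditions $2<\beta$ and $\beta\in\mathbb{Z}^+$ carry over verbatim.

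Next I would substitute $t=2$ into the parameter $\beta=\frac{1}{t-1}\log\left(\frac{I_P^{(t)}(W;S)+1}{\delta^t}\right)$, which collapses to $\beta=\log\left(\frac{I_{\chi^2}(W;S)+1}{\delta^2}\right)$, matching the choice of $\beta$ stated in the corollary. Thus the event on which the bound holds is precisely the one guaranteed by the parent theorem, and it occurs with probability at least $1-\delta$.

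Finally I would simplify the right-hand side of \eqref{Eq: single draw by moment bound new}. With $t=2$ the prefactor becomes $\sqrt{\frac{2t\sigma^2}{n(t-1)}}=\sqrt{\frac{4\sigma^2}{n}}=\frac{2\sigma}{\sqrt{n}}$, while the $t$-th root $\sqrt[t]{I_P^{(t)}(W;S)+1}$ becomes $\sqrt{I_{\chi^2}(W;S)+1}$ and the constant $e^{1/e+1/2}$ is unaffected. Collecting these gives $e^{1/e+1/2}\,2\sigma\sqrt{\frac{\log\left(\sqrt{I_{\chi^2}(W;S)+1}\right)+\log(1/\delta)}{n}}$, which is exactly \eqref{Eq: single draw by moment bound new chi-square}. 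There is essentially no genuine obstacle here; the only point requiring a moment's care is checking that the integrality constraint inherited from the underlying moment bound (namely $\beta\in\mathbb{Z}^+$, playing the role of the moment order invoked through Markov's inequality) remains consistent when specialized to $t=2$, which it does under the stated hypothesis $2<\beta$.
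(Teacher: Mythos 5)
Your proposal is correct and matches the paper's own proof, which likewise obtains the corollary by setting $t=2$ in Theorem~\ref{Theorem: high probabilities by moments} (the paper simply states this in one line, while you verify the substitution of $\beta$, the prefactor $\sqrt{2t\sigma^2/(n(t-1))}=2\sigma/\sqrt{n}$, and the identity $I_P^{(2)}(W;S)=I_{\chi^2}(W;S)$ explicitly). Nothing further is needed.
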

\begin{proof}
This corollary follows immediately by setting $t=2$ in Theorem \ref{Theorem: high probabilities by moments}.
\end{proof}

It is instructive to comment on how this information-theoretic based high-probability generalization error bound compares to other similar information-theoretic bounds such as in \cite{xu2017information}, \cite{esposito2019generalization} and \cite{hellstrom2020generalization}. Our single-draw bound dependence on  $\delta$ (i.e. $\log(\frac{1}{\delta})$) is more beneficial than Xu \textit{et al.}'s bound~\cite[Theorem~3]{xu2017information} dependent on i.e. $(\frac{1}{\delta})$. Our single-draw bound based on chi-square information (along with bounds based on mutual information) is also typically tighter than maximal leakage based single draw bounds \cite{esposito2019generalization}, \cite{hellstrom2020generalization}.

%The single draw upper bound in \cite[Theorem~3]{xu2017information} depends on $\frac{1}{\delta}$ instead of $\log(\frac{1}{\delta})$.

%Let's define the maximal leakage \cite{issa2019operationalmaximalleakage} as follows:
%$$\mathcal{L}(S\rightarrow W)=\log \int \sup_{s\in \mathcal{S}:P_S(s)>0} P_{W|S}(w|s) dw$$
%Single-draw upper bounds based on maximal leakage is offered in \cite{esposito2019generalization} and \cite{hellstrom2020generalization} which is looser than the upper bounds based on mutual information and chi-square information.

A similar single-draw high probability upper bound based on chi-square information has also been provided in \cite{esposito2019generalization}. The approach pursued to lead to such bound in \cite{esposito2019generalization} is based on $\alpha$-R\'eyni divergence and $\alpha$-mutual information, whereas our approach leading to Corollary~\ref{Corollary: chi square result high probability bound} is based on optimization of bounds to the moments of the generalization error with respect to order of the moments.

\section{Numerical Example}
We now illustrate our generalization error bounds within a very simple setting involving the estimation of the mean of a Gaussian random variable $Z \sim \mathcal{N}(\alpha,\beta^2)$ -- where $\alpha$ corresponds to the (unknown) mean and $\beta^2$ corresponds to the (known) variance -- based on $n$ i.i.d. samples $Z_i$ for $i=1,\cdots,n$.

We consider the hypothesis corresponding to the empirical risk minimizer given by $W=\frac{Z_1+\cdots+Z_n}{n}$. We also consider the loss function given by $$\ell(w,z)=\min((w-z)^2,c^2).$$
In view of the fact that the loss function is bounded within the interval $[0,c^2]$, it is also $\frac{c^2}{2}$-subgaussian so that we can apply the generalization error moments upper bounds offered earlier.% \eqref{Eq: Chi-square Upper bound for Moments}.

In our simulations, we consider $\alpha=0$, $\beta^2=1$ and $c=\frac{2}{3}$. We compute the true generalization error numerically. %{\color{blue}It is shown in Proposition~\ref{Proposition: compare power to chi square} that the upper bound based on chi-square is tighter than the some of power information based upper bounds under some conditions. We evaluate the chi-square information based upper bound which is also equivalent to power information of order 2.}
We also compute chi-square and mutual information bounds to the moments of the generalization error appearing in Corollary~\ref{Theorem: Chi-square moments} and Theorem~\ref{Theorem: russo approach}. We focus exclusively on chi-square information -- corresponding to power information of order 2 -- because it has been established in Proposition~\ref{Proposition: compare power to chi square} that the chi-square information bound can be tighter than the power information one under certain conditions. Both the chi-square information and the mutual information are evaluated numerically. Due to complexity in estimation of chi-square information and mutual information, we consider a relatively small number of training samples.

Fig.\ref{fig:Upper bounds compare moments 1} and Fig.\ref{fig:Upper bounds compare moments 2} demonstrate that the chi-square based bounds to the first and second moment of the generalization error is looser than the mutual information based bounds, as suggested earlier. Fig.\ref{fig:Upper bounds compare moments 3 4} also suggests that higher-order moments (and bounds) to the generalization error decay faster than lower-order ones, as highlighted earlier.

\begin{figure}[t!]
   \includegraphics[width=0.48\textwidth]{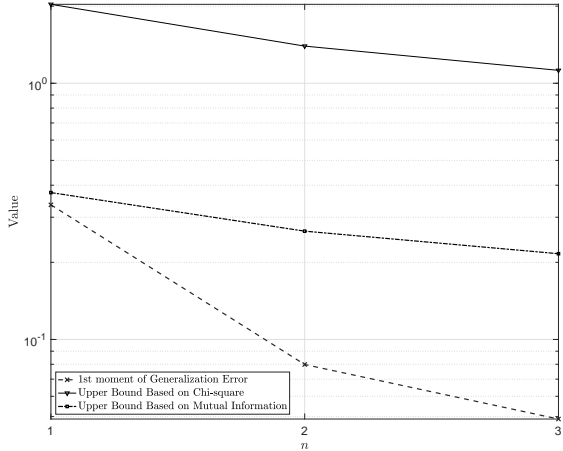}
    \caption{First moment of the generalization error. The figure depicts the true values along with bounds based on mutual information and chi-square information.
    }
    \label{fig:Upper bounds compare moments 1}
\end{figure}
\begin{figure}[t!]
   \includegraphics[width=0.48\textwidth]{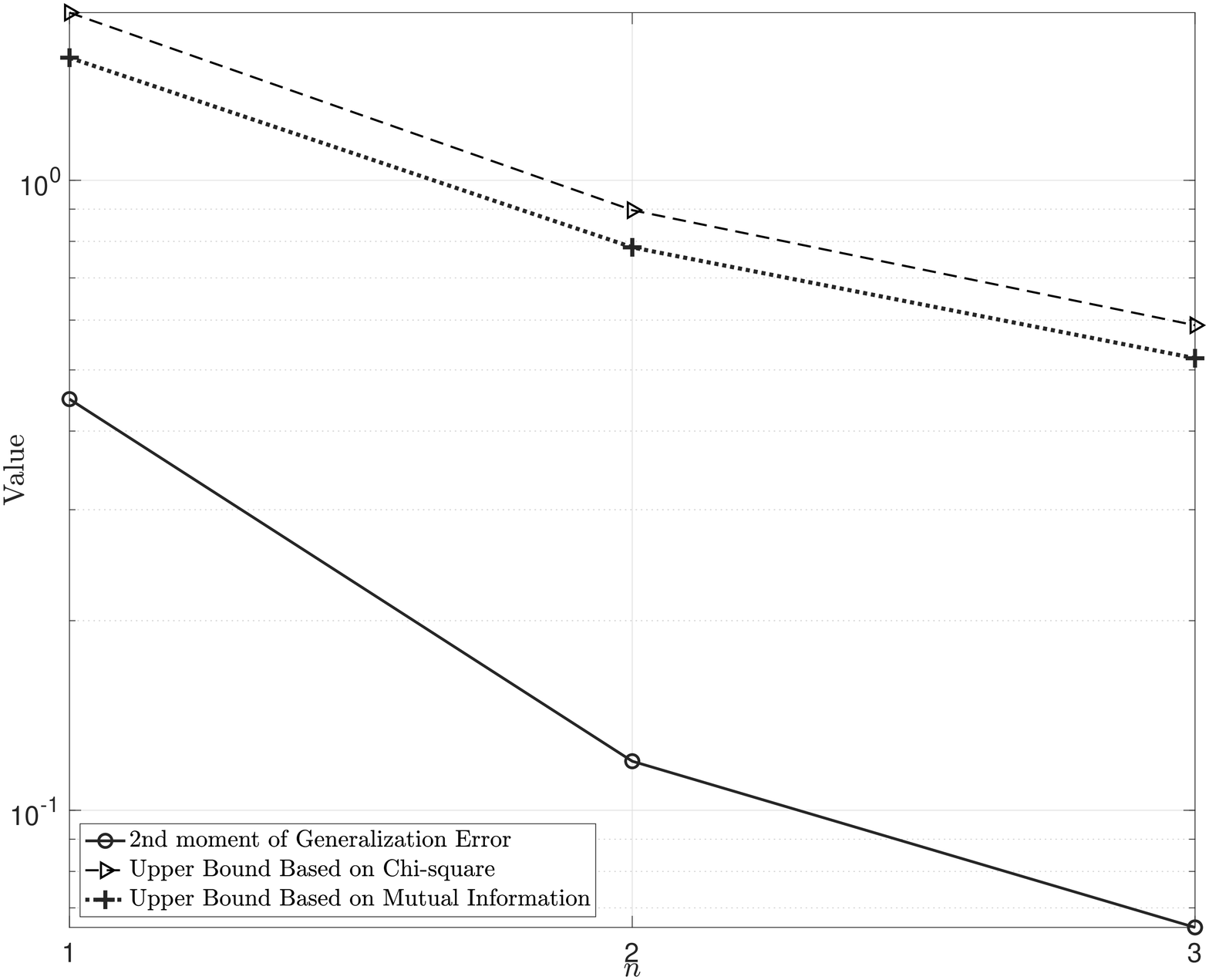}
    \caption{Second moment of the generalization error. The figure depicts the true values along with bounds based on mutual information and chi-square information.
    }
    \label{fig:Upper bounds compare moments 2}
\end{figure}
\begin{figure}[t!]
 \includegraphics[width=0.48\textwidth]{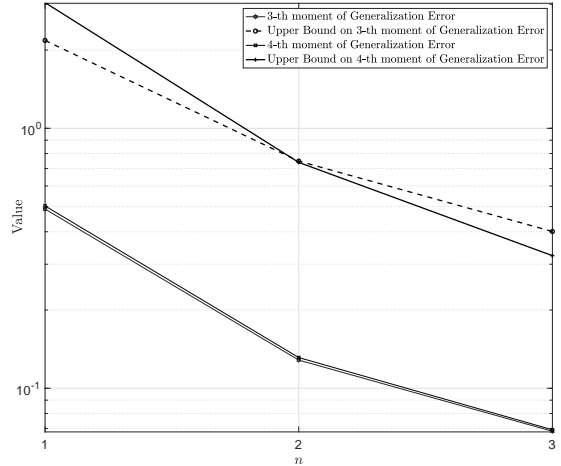}
    \caption{Third and fourth moments of the generalization error. The figure depicts the true values along with bounds based on chi-square information.%3th and 4th moments of generalization error and their upper bounds
    }
    \label{fig:Upper bounds compare moments 3 4}
\end{figure}
%--------------------------------------------------------------------------

\section{Conclusion}\label{Conc}

We have introduced a new approach to obtain information-theoretic oriented bounds to the moments of generalization error associated with randomized supervised learning problems. We have discussed how these bounds relate to existing ones within the literature. Finally, we have also discussed how to leverage the generalization error moment bounds to derive a high probability bounds to the generalization error.

\bibliographystyle{ieeetr}
\bibliography{Refs}

\appendices

\section{Proof of Theorem \ref{Theorem: Power-divergence phi^p}}\label{Proof Theorem: Power-divergence phi^p}
The result follows immediately by noting that:
\begin{align}
    &~~~|\mathbb{E}_{P_{W,S}}[F(L_P(w,\mu),L_E(w,S))]|\\ &\leq\mathbb{E}_{P_{W,S}}[|F(L_P(w,\mu),L_E(w,S))|]\\
    &=\int |F(L_P(w,\mu),L_E(w,S))|\frac{dP_{W,S}}{d(P_W \otimes P_S)}d(P_W \otimes P_S)\\\label{Eq: Holder Inequality}
    &\leq  \mathbb{E}_{P_W \otimes P_S}[|F(L_P(w,\mu),L_E(w,S))|^q]^{\frac{1}{q}}\times \\\nonumber&\quad\left( \int (\frac{dP_{W,S}}{d(P_W \otimes P_S)})^t d(P_W \otimes P_S) \right)^{\frac{1}{t}}\\
    &=\mathbb{E}_{P_W \otimes P_S}[|F(L_P(w,\mu),L_E(w,S))|^q]^{\frac{1}{q}}(I_P^{(t)}(W;S)+1)^{\frac{1}{t}}
\end{align}
where \eqref{Eq: Holder Inequality} is due to H$\Ddot{\text{o}}$lder inequality.
%--------------------------------------------------------------------------
\section{Proof of Theorem \ref{Theorem: sigma subgassian for moments}}\label{Proof Theorem: sigma subgassian for moments}
This result follows from Theorem~\ref{Theorem: Power-divergence phi^p} by considering:
\begin{align}
    F(L_P(w,\mu),L_E(w,s))=(L_P(w,\mu)-L_E(w,s))^m
\end{align}
%Consider Theorem~\ref{Theorem: Power-divergence phi^p} for the following function of empirical and population risk:
%\begin{align}
%    F(L_P(w,\mu),L_E(w,s))=(L_P(w,\mu)-L_E(w,s))^m
%\end{align}
We now have that:
%Now, we have:
\begin{align}
    &\left| \overline{\text{gen}^m}(P_{W|S},\mu)\right| \leq\\\nonumber 
  &\mathbb{E}_{P_W}[\mathbb{E}_{P_S}[|(L_P(w,\mu)-L_E(w,s))|^{mq}]]^{\frac{1}{q}} (I_P^{(t)}(W;S)+1)^{\frac{1}{t}}
\end{align}
We also have that:
\begin{align}\label{eq: moment bound sigma sub}
    \mathbb{E}_{P_S}[|(L_P(w,\mu)-L_E(w,s))|^{mq}]^{\frac{1}{q}}\leq \sigma^m e^{m/e}(\frac{mq}{n})^{\frac{m}{2}} 
\end{align}
in view of the fact that (a) the loss function is $\sigma$-subgaussian hence (b) $\text{gen}(P_{W|S},\mu)$ is $\frac{\sigma}{\sqrt{n}}$-subgaussian and (c) \cite[Lemma 1.4]{Philippe-MIT-2015}. In \cite[Lemma 1.4]{Philippe-MIT-2015}, it is assumed that (c) is valid for $mq>2$ and $mq\in \mathbb{Z}^+$.This completes the proof.

%As we consider the loss function is $\sigma$-subgaussian under distribution $\mu$ for all $w\in\mathcal{W}$, then $\text{gen}(P_{W|S},\mu)$ is $\frac{\sigma}{\sqrt{n}}$-subgaussian based on i.i.d assumption for data samples. As shown in \cite[Lemma 1.4]{Philippe-MIT-2015} for $mq\in \mathbb{Z}^+$ and $2\leq mq$, we have:
%\begin{align}
%    \mathbb{E}_{P_S}[|(L_P(w,\mu)-L_E(w,s))|^{mq}]^{\frac{1}{q}}\leq \sigma^m e^{m/e}(\frac{mq}{n})^{\frac{m}{2}} 
%\end{align}
%Which completes the proof.
%--------------------------------------------------------------------------
% \section{Proof of Corollary \ref{Theorem: Chi-square moments}}\label{Proof Theorem: Chi-square moments}
% Consider $t=q=2$ in Theorem ~\ref{Theorem: sigma subgassian for moments}. It completes the proof.

%--------------------------------------------------------------------------
\section{Proof of Proposition\ref{Proposition: compare power to chi square}}\label{Proof Proposition: compare power to chi square}

This result follows from the inequality given by~\cite[Corollary 5.6]{guntuboyina2013sharp}:
%The following inequality holds for $t>2$ on $(I_P^{(t)}(W;S)+1)^{\frac{1}{t}}$, \cite[Corollary 5.6]{guntuboyina2013sharp}:
\begin{align}\label{Eq: Chisquare in compare to power}
    \sqrt{I_{\chi^2}(W;S)+1}\leq (I_P^{(t)}(W;S)+1)^{\frac{1}{2(t-1)}}
\end{align}
holding for $t>2$. We then have that:
\begin{align}
  &\sigma^m (\frac{2m}{n})^{\frac{m}{2}} e^{m/e}  \sqrt{I_{\chi^2}(W;S)+1}\leq\\\nn
  &\sigma^m (\frac{2m}{n})^{\frac{m}{2}} e^{m/e}  (I_P^{(t)}(W;S)+1)^{\frac{1}{2(t-1)}}\leq\\
  &\sigma^m (\frac{mt}{n(t-1)})^{\frac{m}{2}} e^{m/e}  (I_P^{(t)}(W;S)+1)^{\frac{1}{t}}
\end{align}
where the last inequality is valid if $(\frac{2(t-1)}{t})^{\frac{mt(t-1)}{(t-2)}}-1 \leq I_P^{(t)}(W;S)$ for $t>2$ and considering $\frac{mt}{(t-1)}\in \mathbb{Z}^+$.

%--------------------------------------------------------------------------
 \section{Proof of Theorem \ref{Theorem: russo approach}}\label{Proof Theorem: russo approach}
The loss function is assumed to be $\sigma$-subgaussian under distribution $\mu$ for all $w\in \mathcal{W}$ hence -- in view of the fact that the data samples are i.i.d. -- $L_E(W,S)$ is $\frac{\sigma^2}{n}$-subgaussian and also
$\text{gen}(P_{W|S},\mu)$ is $\frac{\sigma^2}{n}$-subgaussian under distribution $P_S$ for all $w\in \mathcal{W}$.

%If the loss function is $\sigma$-subgaussian under distribution $\mu$ for all $w\in \mathcal{W}$ and data samples are i.i.d, then the $L_E(W,S)$ is $\frac{\sigma^2}{n}$-subgaussian and also
%$\text{gen}(P_{W|S},\mu)$ is $\frac{\sigma^2}{n}$-subgaussian under distribution $P_S$ for all $w\in \mathcal{W}$.

It is possible to establish that the random variable $\text{gen}^2(P_{W|S},\mu)-\mathbb{E}_{P_S}[\text{gen}^2(P_{W|S},\mu)]$ is $(\frac{256 \sigma^4}{n^2},\frac{16 \sigma^2}{n})$-subexponential~\cite[Lemma~1.12]{Philippe-MIT-2015}. \footnote{A random variable $X$ is $(\sigma^2,b)$-subexponential if $\mathbb{E}_{P_X}[e^{\lambda(X-E[X])}]\leq e^{\frac{\lambda^2 \sigma^2}{2}}$ for all $|\lambda|\leq \frac{1}{b}$.} It is worthwhile to mention that the random variable $\text{gen}^2(P_{W|S},\mu)$ is subexponential under distribution $P_S$ for all $w\in\mathcal{W}$. We want to provide the upper bound on the expected value of $\text{gen}^2(P_{W|S},\mu)$ under the joint distribution $P_{W,S}$. Now, we have from the variational representation of the Kullback-Leibler distance that: 
\begin{align}
    &\lambda \mathbb{E}_{P_{W,S}}[ \text{gen}^2(P_{W|S},\mu)]-\log(\mathbb{E}_{P_W\otimes P_S}[e^{\lambda\text{gen}^2(P_{W|S},\mu))}])\\\nonumber 
    &\quad \leq I(W;S)
\end{align}
As the $\text{gen}^2(P_{W|S},\mu)$ is $\frac{16 \sigma^2}{n}$-subexponential under distribution $\mu$ for all $w\in \mathcal{W}$, we have:
\begin{align}
&\log(\mathbb{E}_{P_W\otimes P_S}[e^{\lambda(\text{gen}^2(P_{W|S},\mu)-\mathbb{E}_{P_S}[\text{gen}^2(P_{W|S},\mu)])}])\leq \\\nonumber
&\quad\frac{128\lambda^2\sigma^4}{n^2}
 \quad \text{for}\quad|\lambda|\leq \frac{n}{16\sigma^2}
\end{align}
As $\text{gen}(P_{W|S},\mu)$ is $\frac{\sigma}{\sqrt{n}}$-subgaussian, we also have $\mathbb{E}_{P_S}[\text{gen}^2(P_{W|S},\mu)]\leq \frac{\sigma^2}{n}$ for all $w\in\mathcal{W}$. Therefore the following inequality holds:
\begin{align}
    &\log(\mathbb{E}_{P_W\otimes P_S}[e^{\lambda(\text{gen}^2(P_{W|S},\mu))}])\leq \frac{128\lambda^2\sigma^4}{n^2}+\frac{\lambda\sigma^2}{n}\\\nonumber
&\quad \text{for}\quad|\lambda|\leq \frac{n}{16\sigma^2}
\end{align}
This leads to the inequality:
%Now, we have:
%\begin{align}
%    &\mathbb{E}_{P_{W,S}}[\lambda \text{gen}^2(P_{W|S},\mu)]-\frac{128\lambda^2\sigma^4}{n^2}-\frac{\lambda\sigma^2}{n}\leq\\\nonumber 
%    &\quad I(W;S) \quad \text{for}\quad|\lambda|\leq \frac{n}{16\sigma^2}
%\end{align}
%Reorganizing the terms, the following holds:
\begin{align}
    &\mathbb{E}_{P_{W,S}}[ \text{gen}^2(P_{W|S},\mu)]\leq \frac{128\lambda\sigma^4}{n^2}+\frac{\sigma^2}{n}+
     \frac{I(W;S)}{\lambda}
\end{align}
holding for $|\lambda|\leq \frac{n}{16\sigma^2}$.

The final result follows by choosing $\lambda=\frac{n}{16 \sigma^2}$.
%Now by considering $\lambda=\frac{n}{16 \sigma^2}$, the final result holds.
%--------------------------------------------------------------------------
\section{Proof of Proposition~\ref{Proposition: 2nd moment chisquare vs mutual information}}\label{Proof Proposition: 2nd moment chisquare vs mutual information}
The result follow from the inequality given by~\cite{dragomir2000some}:
%It is shown in \cite{dragomir2000some} that
\begin{align}
    I(W;S)\leq \log(I_{\chi^2}(W;S)+1)
\end{align}
We then have that:
\begin{align}
  16I(W;S)+9\leq 16\log(I_{\chi^2}(W;S)+1)+9 
\end{align}
and, for $94 \leq I_{\chi^2}(W;S)$, we also have that:
\begin{align}
  16\log(I_{\chi^2}(W;S)+1)+9 \leq 4e^{2/e}\sqrt{
  I_{\chi^2}(W;S)+1} 
\end{align}

% ----------------------------------------------
\section{Proof of Theorem~\ref{Theorem: high probabilities by moments}}\label{Proof Theorem: high probabilities by moments}
Consider that:
%Consider Markov inequality, for $|\text{gen}(P_{W|S},\mu)|$ under distribution $P_{W,S}$
\begin{align}
    &P_{W,S}(|\text{gen}(P_{W|S},\mu)|\geq r)=\\\nonumber
     &P_{W,S}(|\text{gen}(P_{W|S},\mu)|^{2m}\geq r^{2m})=\\
     &P_{W,S}(\text{gen}(P_{W|S},\mu)^{2m}\geq r^{2m})\leq\\
     &\quad\frac{\mathbb{E}_{P_{W,S}}[\text{gen}(P_{W|S},\mu)^{2m}]}{r^{2m}}\leq\\
     &\quad \sigma^{2m} (\frac{2qm}{n})^{m} e^{2m/e}  \frac{\sqrt[t]{(I_{P}^{(t)}(W;S)+1)}}{r^{2m}}
\end{align}
where the first inequality is due to Markov' inequality and the second inequality is due to Corollary~\ref{Theorem: Chi-square moments}.
%The last inequality is based on Theorem~\ref{Theorem: Chi-square moments}.
Consider also that
%Now let's consider 
$$\delta=\sigma^{2m} \left(\frac{2qm}{n}\right)^{m} e^{2m/e}  \frac{\sqrt[t]{(I_{P}^{(t)}(W;S)+1)}}{r^{2m}} $$
We then have immediately that with probability at least $1-\delta$ under the distribution $P_{W,S}$ it holds that:
%Then We have under distribution $P_{W,S}$:
\begin{align}\label{Eq: single draw by moment bound}
    &|\text{gen}(P_{W|S},\mu)|\leq \\\nonumber
    &\quad \min_{m>0} \sigma \sqrt{\frac{2mq}{n}} e^{1/e}  \frac{\sqrt[2mt]{I_{P}^{(t)}(W;S)+1}}{\sqrt[2m]{\delta}}
\end{align}
The value of $m$ that optimizes the right hand side in the bound above is given by: $$m^\star=\log\left(\frac{\sqrt[t]{I_{P}^{(t)}(W;S)+1}}{\delta}\right). $$
Based on \eqref{eq: moment bound sigma sub}, the $m^\star$ should also satisfy the conditions, $2<\frac{m^\star t}{t-1}$ and $\frac{m^\star t}{t-1}\in \mathbb{Z}^+$. Therefore, we have $2<\frac{1}{t-1}\log\left(\frac{I_{P}^{(t)}(W;S)+1}{\delta^t}\right)$and $\frac{1}{t-1}\log\left(\frac{I_{P}^{(t)}(W;S)+1}{\delta^t}\right)\in \mathbb{Z}^+$. The result then follows immediately by substituting $m^\star$ in \eqref{Eq: single draw by moment bound}.

%Now, if we could compute the minimum of $m$, we could find the minimum of upper bounds on the absolute of generalization error. 
%We have $m\in \mathbb{Z}^+$ and the optimum $m$ in \eqref{Eq: single draw by moment bound} is:
%$$m^\star=\log\left(\frac{\sqrt[t]{I_{P}^{(t)}(W;S)+1}}{\delta}\right) $$
%which is an integer positive by the assumption $\log\left(\frac{\sqrt[t]{I_{P}^{(t)}(W;S)+1}}{\delta}\right)\in \mathbb{Z}^+$.

%Now, if we substitute $m$ in \eqref{Eq: single draw by moment bound} by $m^\star$, the result holds. 
% %--------------------------------------------------------------------------
\end{document}